\documentclass[reqno,10pt,a4paper,epic]{amsart}

\usepackage{hyperref}
\usepackage{amssymb}
\usepackage{url}

\allowdisplaybreaks[4]

\let\phi=\varphi
\let\kappa=\varkappa
\let\epsilon=\varepsilon

\DeclareMathOperator{\sym}{sym_{\mathrm{c}}}
\DeclareMathOperator{\cosym}{cosym_{\mathrm{c}}}

\newcommand*{\abs}[1]{\left|#1\right|}
\newcommand*{\Ev}{\mathbf{E}}
\newcommand*{\tc}{\mathrm{t}}
\newcommand*{\rd}{\mathrm{r}}
\theoremstyle{theorem}
\newtheorem{proposition}{Proposition}
\theoremstyle{definition}

\theoremstyle{remark}
\newtheorem{remark}{Remark}

\usepackage[all]{xy}
\SelectTips{cm}{}
\newcommand{\QED}{\hfill$\square$}

\usepackage{mathrsfs}
\let\mathcal\mathscr
\usepackage[mathscr]{eucal}

\newcommand{\cprime}{\/{\mathsurround=0pt$'$}}
\setcounter{tocdepth}{2}

\begin{document}

\title[Integrability properties of some symmetry reductions]{Integrability
  properties of some equations obtained by symmetry reductions}
\author{H.~Baran} \address{Mathematical Institute, Silesian University in
  Opava, Na Rybn\'{\i}\v{c}ku 1, 746 01 Opava, Czech Republic}
\email{Hynek.Baran@math.slu.cz} \author{I.S.~Krasil{\cprime}shchik}
\address{Independent University of Moscow, B. Vlasevsky 11, 119002 Moscow,
  Russia \& Mathematical Institute, Silesian University in Opava, Na
  Rybn\'{\i}\v{c}ku 1, 746 01 Opava, Czech Republic}
\email{josephkra@gmail.com} \author{O.I.~Morozov} \address{Faculty of Applied
  Mathematics, AGH University of Science and Technology, Al. Mickiewicza 30,
  Krak\'ow 30-059, Poland} \email{morozov{\symbol{64}}agh.edu.pl}
\author{P.~Voj{\v{c}}{\'{a}}k} \address{Mathematical Institute, Silesian
  University in Opava, Na Rybn\'{\i}\v{c}ku 1, 746 01 Opava, Czech Republic}
\email{Petr.Vojcak@math.slu.cz}

\date{\today}

\begin{abstract}
  In our recent paper~\cite{B-K-M-V:Sym-Red}, we gave a complete description
  of symmetry reduction of four Lax-integrable (i.e., possessing a
  zero-curvature representation with a non-removable parameter)
  $3$-dimensional equations. Here we study the behavior of the integrability
  features of the initial equations under the reduction procedure. We show
  that the ZCRs are transformed to nonlinear differential coverings of the
  resulting 2D-systems similar to the one found for the Gibbons-Tsarev
  equation in~\cite{O-S:NonHom-Sys}. Using these coverings we construct
  infinite series of (nonlocal) conservation laws and prove their
  nontriviality. We also show that the recursion operators are not preserved
  under reductions.
\end{abstract}

\keywords{Partial differential equations, symmetry reductions, solutions, the
  Gibbons-Tsarev equation, Lax-integrable equations}

\subjclass[2010]{35B06}

\maketitle

\tableofcontents
\newpage

\section*{Introduction}
\label{sec:introduction}

In~\cite{B-K-M-V:Sym-Red} we gave a complete description of symmetry
reductions for four three dimensional systems: the universal hierarchy
equation, the 3D rdDym equation, the modified Veronese web equation, and
Pavlov's equation. The result comprised more than~$30$ equations, but the
majority of them were either exactly solvable or linearized by the generalized
Legendre transformations. Nevertheless, there were~$10$ `interesting'
reductions, among which two well-known equations, i.e., the Liouville and
Gibbons-Tsarev equations,~\cite{Liou,GT}. The rest eight can be divided in two
groups by their symmetry properties: five equations admit infinite-dimensional
Lie algebras of contact symmetries (with functional parameters) and three
others possess finite-dimensional symmetry algebras. These are
\begin{equation}
  \label{eq:1}
  u_y u_{xy}-u_x u_{yy}=e^y u_{xx}
\end{equation}
(reduction of the universal hierarchy equation),
\begin{equation}
  \label{eq:2}
  u_{yy}=(u_x+ x)u_{xy}-u_y(u_{xx}+2)
\end{equation}
(reduction of the 3D rdDym equation), and
\begin{equation}
  \label{eq:3}
  u_{xx}=(x-u_y)u_{xy} + (2y+u_x)u_{yy} -u_y
\end{equation}
(reduction of the Pavlov equation)\footnote{All the reductions of the modified
  Veronese web equation were either exactly solvable or linearizable.}. These
equations are pair-wise inequaivalent (see Section~\ref{sec:conclusions}).

We deal with this three equations below and study how the integrability
properties of the initial 3D systems behave under reduction. More precisely,
we construct (Section~\ref{sec:reduction-lax-pairs}) the reductions of the
zero-curvature representations for Equations~\eqref{eq:1}--\eqref{eq:2} and
show that they result in differential coverings of the form
\begin{equation*}
  w_x=\frac{a_2w^2+a_1w+a_0}{w^2+c_1w+c_0},\qquad
  w_y=\frac{b_2w^2+b_1w+b_0}{w^2+c_1w+c_0},
\end{equation*}
where~$a_i$, $b_i$, $c_i$ are functions in~$x$, $y$, $u$, $u_x$,
and~$u_y$. These coverings are similar to the one found
in~\cite{O-S:NonHom-Sys} for the Gibbons-Tsarev equation and this resemblance,
by all means, reflects the relations between generalized Gibbons-Tsarev
equations and integrable 3D-systems~\cite{O-S:Sys-G-T}. In
Section~\ref{sec:hier-nonl-cons}, for every nonlinear covering we construct
an infinite series of conservation laws and prove their nontriviality.

We also study the behavior of the recursion operators for symmetries of
three-dimensional systems and show that these operators do not survive under
reduction (Section~\ref{sec:reduct-recurs-oper}).

In Section~\ref{sec:symm-cosymm-cons} local symmetries and cosymmetries of the
reduction equations are described. The corresponding conservation laws are
presented in the Appendix.

Throughout the text the notion of (differential) covering is understood in the
sense of~\cite{KV}.

\section{Reduction of the Lax pairs}
\label{sec:reduction-lax-pairs}

Using Lax representations of the 3D equations, whose reductions are the
equations at hand, we construct here nonlinear coverings of
Equations~\eqref{eq:1}--\eqref{eq:3}.

\subsection{Equation~\eqref{eq:1}}
\label{sec:equation-eqrefeq:1}

This equation is obtained as the reduction of the universal hierarchy
equation\footnote{To save the notation here and below, we denote by~$u$ the
  dependent and by~$x$, $y$ the dependent variables. These are \emph{not} the
  same as in the initial equation; see the details in~\cite{B-K-M-V:Sym-Red}.}
\begin{equation}
  \label{eq:4}
  u_{yy} = u_z u_{xy} - u_y u_{xz}
\end{equation}
with respect to the symmetry
\begin{equation}
    \label{uhe_symmetry}
  \phi = u_z + u_x + y u_y + u.
\end{equation}
Equivalently, this reduction may be written in the form
\begin{equation}
  \label{eq:5}
  u_{yy} = u_y u_{xx} - (u_x + u) u_{xy} + u_x u_y
\end{equation}
and Equation~\eqref{eq:1} transforms to~\eqref{eq:5} by the change of
variables~$x\mapsto y$, $y\mapsto x$, $u\mapsto -e^yu$.

Equation~\eqref{eq:4} admits the following Lax representation
\begin{equation}\label{eq:6}
  \begin{array}{ll}
    w_z &= (w u_z - u_y) w^{-2} w_x,\\[2pt]
    w_y &= u_y w^{-1} w_x.
  \end{array}
\end{equation}
The symmetry~$\phi$ can be extended to a symmetry~$\Phi=(\phi,\chi)$
of~\eqref{eq:6}, where
\begin{equation*}
  \chi=w_z + w_x + y w_y + w
\end{equation*}
and the corresponding reduction leads to the covering
\begin{equation}
  \label{eq:7}
  \begin{array}{ll}
    w_x&=-\dfrac{w^3}{w^2 - (u_x + u) w - u_y}, \\[10pt]
    w_y&=-\dfrac{u_y w^2}{w^2 - (u_x + u) w - u_y}
  \end{array}
\end{equation}
of Equation~\eqref{eq:5}.

\begin{remark}
  Equation~\eqref{eq:1} can be written in the potential form
  \begin{equation*}
   \left(\frac{u_y}{u_x}\right)_y = \left(\frac{e^y}{u_x}\right)_x,
  \end{equation*}
  the corresponding Abelian covering being
  \begin{equation}
    v_x=\frac{u_y}{u_x},\qquad v_y=\frac{e^y}{u_x}.
  \end{equation}
  Then~$v$ enjoys the equation
  \begin{equation}
    \label{eq:8}
    v_y - v_{yy} = v_yv_{xx} - v_xv_{xy},
  \end{equation}
  which also admits the rational covering
  \begin{equation*}
    \begin{array}{ll}
      w_x&=\dfrac{w v_x-x v_x+v_y}{w^2+(-2 x+v_x) w+x^2-x v_x+v_y},\\[10pt]
      w_y&=\dfrac{w v_y-x v_y}{w^2+(-2 x+v_x) w+x^2-x v_x+v_y}.
    \end{array}
  \end{equation*}
  of the same type.\QED
\end{remark}

\subsection{Equation~\eqref{eq:2}}
\label{sec:equation-eqrefeq:2}

This equation was obtained as the reduction of the 3D rdDym equation
\begin{equation}
  \label{eq:9}
  u_{ty} = u_x u_{xy} - u_y u_{xx}
\end{equation}
with respect to the symmetry
\begin{equation}
  \label{3DrdDym_symmetry}
  \phi= u_t - x u_x - u_y + 2  u.
\end{equation}
The Lax representation for Equation~\eqref{eq:9} is
\begin{equation}
  \label{eq:10}
  \begin{array}{ll}
    w_t &= (u_x + w) w_x,\\[2pt]
    w_y &= -u_y w^{-1} w_x.
  \end{array}
\end{equation}
The symmetry~$\phi$ extends to the one of~\eqref{eq:10}: $\Phi=(\phi,\chi)$,
where
\begin{equation*}
  \chi=w_t - x w_x - w_y +  u.
\end{equation*}
Reduction of the covering~\eqref{eq:10} with respect to~$\Phi$ leads to the
covering
\begin{equation}\label{eq:22}
  \begin{array}{ll}
    w_x&=-\dfrac{w^2}{w^2 + (u_x - x) w + u_y},\\[10pt]
    w_y&= \dfrac{u_y w}{w^2 + (u_x - x) w + u_y}.
  \end{array}
\end{equation}
over Equation~\eqref{eq:2}.

\subsection{Equation~\eqref{eq:3}}
\label{sec:equation-eqrefeq:3}

Finally, Equation~\eqref{eq:3} is the reduction of the Pavlov equation
\begin{equation}
  \label{eq:11}
  u_{yy} = u_{tx} + u_y u_{xx} - u_x u_{xy}
\end{equation}
with respect to the symmetry
\begin{equation}\label{Pavlov_symmetry}
  \phi = u_t - 2 x u_x - y u_y + 3 u.
\end{equation}
The Pavlov equation possesses the Lax pair
\begin{equation}
  \label{eq:12}
  \begin{array}{ll}
    w_t& = (w^2 - w u_x - u_y) w_x,\\[2pt]
    w_y& = (w - u_x) w_x.
  \end{array}
\end{equation}
The symmetry~$\phi$ lifts to the symmetry~$\Phi=(\phi,\chi)$ of~\eqref{eq:12},
where
\begin{equation*}
  \chi=w_t - 2 x w_x - y w_y + w.
\end{equation*}
Reduction of the covering~\eqref{eq:12} with respect to this symmetry results
in the nonlinear covering
\begin{equation}
  \label{eq:13}
  \begin{array}{ll}
    w_x&=-\dfrac{w (w - u_y)}{w^2 - (u_y + x) w + x u_y - u_x - 2 y},\\[10pt]
    w_y&=-\dfrac{w}{w^2 - (u_y + x) w + x u_y - u_x - 2 y}
  \end{array}
\end{equation}
of Equation~\eqref{eq:3}.

\begin{remark}
  Equation~\eqref{eq:3} has a close relative. Namely, if we accomplish
  reduction of the Pavlov equation using another symmetry
  \begin{equation*}
    \phi' = u_t - y u_x + 2  x
  \end{equation*}
  the resulting equation will be
  \begin{equation}
    \label{eq:14}
    u_{yy} = (u_y +  y) u_{xx} - u_x u_{xy} - 2.
  \end{equation}
  The symmetry~$\phi'$ can also be lifted to~\eqref{eq:12}
  by~$\Phi'=(\phi',\chi')$, where
  \begin{equation*}
    \chi'=w_t - y w_x + 1,
  \end{equation*}
  and the reduction of~\eqref{eq:12} will be
  \begin{equation}
    \label{eq:15}
    \begin{array}{ll}
      w_x&=-\dfrac{1}{w^2 - u_x w - u_y - y},\\[10pt]
      w_y&=-\dfrac{w - u_x}{w^2 - u_x w - u_y - y}.
    \end{array}
  \end{equation}
  By the change of variables~$u\mapsto u-y^2/2$, Equation~\eqref{eq:14}
  transforms to the Gibbons-Tsarev equation
  \begin{equation*}
    u_{yy} = u_y u_{xx} - u_x u_{xy} - 1,
  \end{equation*}
  while~\eqref{eq:13} becomes
  \begin{equation*}
    \begin{array}{ll}
      w_x&=-\dfrac{1}{w^2 - u_x w - u_y},\\[10pt]
      w_y&=-\dfrac{w - u_x}{w^2 - u_x w - u_y},
    \end{array}
  \end{equation*}
  cf.~\cite{O-S:NonHom-Sys}.\QED
\end{remark}

\section{Local symmetries and cosymmetries of the reduction equations}
\label{sec:symm-cosymm-cons}

We present here computational results on classical symmetries and cosymmetries
of Equations~\eqref{eq:1}--\eqref{eq:3}, i.e., solutions of the equations
\begin{equation*}
  \ell_{\mathcal{E}}(\phi)=0
\end{equation*}
and
\begin{equation*}
  \ell_{\mathcal{E}}^*(\psi)=0,
\end{equation*}
where~$\ell_{\mathcal{E}}$ is the linearization of the equation at hand
and~$\ell_{\mathcal{E}}^*$ is its formally adjoint and~$\phi$ and~$\psi$
depend on~$x$, $y$, $u$, $u_x$, $u_y$ (see, e.g.,~\cite{KLV}). The
conservation laws corresponding to classical cosymmetries are presented in the
Appendix below. The spaces of solutions are denoted by~$\sym(\mathcal{E})$
and~$\cosym(\mathcal{E})$, respectively.

All the equations under consideration happen to possess a scaling symmetry and
thus admit weights (which we denote by~$\abs{\cdot\,}$) with respect to which
they become homogeneous.

\subsection{Equation~\eqref{eq:1}}
\label{sec:equation-eqrefeq:1-1}

We consider this equation in the form~\eqref{eq:5}, i.e.,
\begin{equation*}
  u_{yy} = u_y u_{xx} - (u_x + u) u_{xy} + u_x u_y.
\end{equation*}
The weights are
\begin{equation*}
  \abs{x}=0,\quad \abs{y}=1, \quad\abs{u}=-1, \quad\abs{u_x}=-1,
  \quad\abs{u_y}=-2.
\end{equation*}

\subsubsection*{Symmetries}
\label{sec:symmetries}

The defining equation for symmetries is\footnote{Here and below~$D_x$
  and~$D_y$ denote the total derivatives with respect to~$x$ and~$y$.}
\begin{equation*}
  D_y^2(\phi) = u_yD_x^2(\phi) - (u_x+u)D_xD_y(\phi) + (u_y-u_{xy})D_x(\phi) +
  (u_{xx}+u_x)D_y(\phi) - u_{xy}\phi.
\end{equation*}
The space~$\sym(\mathcal{E})$ spans the symmetries
\begin{equation*}
  \phi_{-1}=u_y,\quad \phi_0= yu_y+u,\quad \phi'_0=u_x,\quad \phi_1 = e^{-x},  
\end{equation*}
where the subscripts coincide with the weights\footnote{To a symmetry~$\phi$ we
  assign the weight of the corresponding evolutionary vector
  field~$\Ev_\phi$.}.

\subsubsection*{Cosymmetries}
\label{sec:cosymmetries}

The defining equation for cosymmetries of Equation~\eqref{eq:1} is
\begin{equation*}
  D_y^2(\psi)=u_yD_x^2(\psi) -(u_x+u)D_xD_y(\psi) + 2(u_{xy}+u_y)D_x(\psi) -
  2(u_{xx}+u_x)D_y(\psi) - 3u_{xy}\psi.
\end{equation*}
The space~$\cosym(\mathcal{E})$ is $6$-dimensional and spans the following
cosymmetries:
\begin{equation*}
  \psi_{-3} = e^{4 x} (3 u_x^2+8 u^2+10 u u_x+2 u_y),\quad
  \psi_{-2} = e^{3 x} (3 u+2 u_x), \quad
  \psi_{-1} = e^{2 x}
\end{equation*}
and
\begin{gather*}
  \psi_3 = \frac{1}{u_y^2},\qquad \psi_4 =\frac{2 u_x-y u_y+2 u}{u_y^3},\\
  \psi_5 =\frac{-4 u_x y u_y+6 u u_x+3 u_x^2-4 y u u_y+3 u^2+2 u_y+y^2
    u_y^2}{u_y^4},
\end{gather*}
where superscript coincides with the weight\footnote{To every cosymmetry we
  assign the weight of the corresponding variational form,
  see~\cite{K-V:GomJetSpaces}}.

\subsection{Equation~\eqref{eq:2}}
\label{sec:equation-eqrefeq:2-1}

The weights are
\begin{equation*}
  \abs{x}=1,\quad \abs{y}=0,\quad\abs{u}=2,\quad\abs{u_x}=1,\quad\abs{u_y}=2.
\end{equation*}

\subsubsection*{Symmetries}
\label{sec:symmetries-1}

The linearized equation is
\begin{equation*}
  D_y^2(\phi)=(u_x+x)D_xD_y(\phi) -u_yD_x^2(\phi) + u_{xy}D_x(\phi) -
  (u_{xx}+2)D_y(\phi). 
\end{equation*}
The space~$\sym(\mathcal{E})$ is generated by the symmetries
\begin{equation*}
  \phi_{-2}=1,\quad \phi_{-1}=u_x+x,\quad \phi_0=u-\frac{1}{2}xu_x,\quad
  \phi'_0=u_y. 
\end{equation*}

\subsubsection*{Cosymmetries}
\label{sec:cosymmetries-1}

The defining equation for cosymmetries reads
\begin{equation*}
  D_y^2(\psi)=(u_x+x)D_xD_y(\psi) - u_yD_x^2(\psi) - 2u_{xy}D_x(\psi) +
  (2u_{xx}+3)D_y(\psi). 
\end{equation*}
The space~$\cosym(\mathcal{E})$ is generated by the cosymmetries
\begin{align*}
  \psi_{-3}&=\frac{e^{-2 y} (u_x+x)}{u_y^3},&&\psi_2=1,\\
  \psi_{-2}&=\frac{e^{-y}}{u_y^2},&&\psi_3=u_x+2 x.
\end{align*}

\subsection{Equation~\eqref{eq:3}}
\label{sec:equation-eqrefeq:3-1}

The weights of variables are
\begin{equation*}
  \abs{x}=1,\quad \abs{y}=2,\quad \abs{u}=3,\quad \abs{u_x}= 2,\quad
  \abs{u_y}=1. 
\end{equation*}
in this case.

\subsubsection*{Symmetries}
\label{sec:symmetries-2}

The symmetries are defined by the equation
\begin{equation*}
  D_x^2(\phi)=(x-u_y)D_xD_y(\phi) +(2y+u_x)D_y^2(\phi)-D_y(\phi)
\end{equation*}
and the space~$\sym(\mathcal{E})$ spans the symmetries\begin{align*}
  \phi_0&=-\frac{1}{3} x u_x -\frac{2}{3} y u_y + u, &&
  \phi_{-1}= u_x - x u_y + y-\frac{1}{2} x^2, \\
  \phi_{-2}&= u_y +2 x, && \phi_{-3}=1.
\end{align*}

\subsubsection*{Cosymmetries}
\label{sec:cosymmetries-2}

The defining equation for cosymmetries is of the form
\begin{equation*}
  D_x^2(\psi) = (x-u_y)D_xD_y(\psi) + (2y+u_x)D_y^2 - u_{yy}D_x +3(2-u_{xy})D_y.
\end{equation*}
The space~$\cosym(\mathcal{E})$ is $6$-dimensional and spans the elements 
\begin{align*}
  \psi_7&=\frac{54}{5} x u_x u_y + \frac{164}{5} x u_y y + \frac{256}{5} x^2 y
  + 2 x u + \frac{4}{5} u u_y+\frac{12}{5} u_y^2 u_x + 4 y u_x + \frac{36}{5}
  u_y^2 y\\
  & + \frac{82}{5} x^2 u_x + \frac{512}{15} x^3 u_y+ \frac{32}{5} x u_y^3 +
  \frac{96}{5} x^2 u_y^2 + \frac{32}{5} y^2 +
  \frac{512}{15} x^4 + \frac{3}{5} u_x^2+u_y^4, \\
  \psi_6&=\frac{49}{4} x y + 4 x u_x + \frac{3}{2} u_y u_x + \frac{9}{2} u_y y
  + \frac{49}{4} x^2 u_y + \frac{21}{4} x u_y^2 + \frac{343}{24} x^3 +
  \frac{1}{4}  u+u_y^3, \\
  \psi_5&=4 x u_y+6 x^2+2 y+\frac{2}{3} u_x+u_y^2, \\
  \psi_4&=\frac{5}{2} x+u_y, \\
  \psi_3&=1,\\
  \psi_{-1}&=\frac{1}{(-x u_y+u_x+2 y)^2}.
\end{align*}

\section{Hierarchies of nonlocal conservation laws}
\label{sec:hier-nonl-cons}

Using the nonlinear coverings presented in
Section~\ref{sec:reduction-lax-pairs} we construct here infinite hierarchies
of nonlocal conservation laws for Equations~\eqref{eq:1}--\eqref{eq:1}.

\subsection{A general construction}
\label{sec:general-construction}

The initial step of the construction is the so-called \emph{Pavlov
  reversing},~\cite{P-J-Y:Int-M-S-H} (see~\cite{I.K:Nat-Constr} for the
invariant geometrical interpretation). Let~$\mathcal{E}$ be an equation in two
independent variables~$x$ and~$y$ and unknown function~$u$ and
\begin{equation*}
  w_x=X(x,y,[u],w),\qquad w_y=Y(x,y,[u],w)
\end{equation*}
be a differential covering over~$\mathcal{E}$, where~$[u]$ denotes~$u$ itself
and a collection of its derivatives up to some finite order. Then the
system
\begin{equation}
  \label{eq:16}
  \psi_x=-X(x,y,[u],\lambda)\psi_\lambda,\qquad
  \psi_y=-Y(x,y,[u],\lambda)\psi_\lambda 
\end{equation}
is also compatible modulo~$\mathcal{E}$ (thus, the nonlocal variable~$w$ turns
into a formal parameter in the new setting).

Assume now that
\begin{align*}
  X&=X_{-1}\lambda+X_0+\frac{X_1}{\lambda}+\dots+\frac{X_i}{\lambda^i}+\dots,\\
  Y&=Y_{-1}\lambda+Y_0+\frac{Y_1}{\lambda}+\dots+\frac{Y_i}{\lambda^i}+\dots,
\end{align*}
where~$X_i$, $Y_i$, $i\geq-1$, are functions in~$x$, $y$ and~$[u]$, and also
expand~$\psi$ in formal Laurent series
\begin{equation*}
  \psi=\psi_{-1}\lambda+\psi_0+\frac{\psi_1}{\lambda}+\dots+
  \frac{\psi_i}{\lambda^i}+\dots
\end{equation*}
Then~\eqref{eq:16} implies
\begin{equation*}
  \psi_{i,x}=-\sum_{j+k=i+1}kX_j\psi_k,\qquad\psi_{i,y}=-\sum_{j+k=i+1}kY_j\psi_k,
\end{equation*}
or  
\begin{align*}
  &\psi_{-1,x}=-X_{-1}\psi_{-1},&&\psi_{-1,y}=-Y_{-1}\psi_{-1};\\
  &\psi_{0,x}=-X_0\psi_{-1},&&\psi_{0,y}=-Y_0\psi_{-1};\\
  &\psi_{1,x}=X_{-1}-X_1\psi_{-1},&&\psi_{1,y}=Y_{-1}-Y_1\psi_{-1};\\
  &\psi_{2,x}=2X_{-1}\psi_2+X_0\psi_1-X_2\psi_{-1},&&
  \psi_{2,y}=2Y_{-1}\psi_2+Y_0\psi_1-Y_2\psi_{-1};\\
  &\dots&&\dots
\end{align*}
and
\begin{align*}
  \psi_{k,x}&=kX_{-1}\psi_k+(k-1)X_0\psi_{i-1}+\dots+X_{k-2}\psi_1-X_k\psi_{-1},\\
  \psi_{k,y}&=kY_{-1}\psi_k+(k-1)Y_0\psi_{i-1}+\dots+Y_{k-2}\psi_1-Y_k\psi_{-1}\\
\end{align*}
for all~$k>2$.

In general, this system defines an infinite-dimensional non-Abelian covering
(which may be trivial generally) over the base equation~$\mathcal{E}$, but in
the particular case~$X_{-1}=Y_{-1}=0$ the covering becomes Abelian, i.e.,
transforms to an infinite series of (nonlocal) conservation laws. Indeed, the
first pair of equations reads
\begin{equation*}
  \psi_{-1,x}=0,\qquad\psi_{-1,y}=0
\end{equation*}
in this case and without loss of generality we may set~$\psi_{-1}=1$. The rest
equations read
\begin{align*}
  &\psi_{0,x}=-X_0,&&\psi_{0,y}=-Y_0;\\
  &\psi_{1,x}=-X_1,&&\psi_{1,y}=-Y_1;\\
  &\psi_{2,x}=X_0\psi_1-X_2,&&  \psi_{2,y}=Y_0\psi_1-Y_2;\\
  &\psi_{3,x}=2X_0\psi_2+X_1\psi_1-X_3,&&\psi_{3,x}=2Y_0\psi_2+Y_1\psi_1-Y_3;\\
  &\dots&&\dots
\end{align*}
and
\begin{equation}\label{eq:21}
  \begin{array}{l}
    \psi_{k,x}=(k-1)X_0\psi_{k-1}+(k-2)X_1\psi_{k-2}+\dots+X_{k-2}\psi_1-X_k,\\
    \psi_{k,y}=(k-1)Y_0\psi_{k-1}+(k-2)Y_1\psi_{k-2}+\dots+Y_{k-2}\psi_1-Y_k
\end{array}
\end{equation}
for all~$k>3$.

\begin{remark}
  The first two pairs of equations define local conservation laws (probably,
  trivial) and the potential $\psi_0$ does not enter the other equations. This
  means that the obtained covering is the Whitney product of the one-dimensional
  Abelian covering~$\tau_0$ associated to~$\psi_0$ and the
  infinite-dimensional~$\tau_*$ related to~$\psi_1$, $\psi_2,\dots$ We shall
  deal with~$\tau_*$ below.\QED
\end{remark}

We now confine ourselves to the case
\begin{equation}\label{eq:18}
  X=\frac{a_2w^2+a_1w+a_0}{w^2+c_1w+c_0},\qquad
  Y=\frac{b_2w^2+b_1w+b_0}{w^2+c_1w+c_0},
\end{equation}
where~$a_i$, $b_i$, and~$c_i$ are functions in~$x$, $y$, and~$[u]$, and deduce
the needed Laurent expansions. One has
\begin{multline*}
  \frac{a_2\lambda^2+a_1\lambda+a_0}{\lambda^2+c_1\lambda+c_0} =
  \left(a_2+\frac{a_1}{\lambda}+\frac{a_0}{\lambda^2}\right)\cdot
  \left(\frac{1}{1+\frac{c_1\lambda+c_0}{\lambda^2}}\right)\\ =
  \left(a_2+\frac{a_1}{\lambda}+\frac{a_0}{\lambda^2}\right) \cdot
  \sum_{i\geq0} \left(-\frac{c_1\lambda+c_0}{\lambda^2}\right)^i.
\end{multline*}
Let us present temporally the second factor in the form
\begin{equation*}
  \sum_{i\geq0} \left(-\frac{c_1\lambda+c_0}{\lambda^2}\right)^i = \sum_{i\geq
  0}\frac{d_i}{\lambda^i}.
\end{equation*}
Then
\begin{multline*}
  \frac{a_2\lambda^2+a_1\lambda+a_0}{\lambda^2+c_1\lambda+c_0} =
  \left(a_2+\frac{a_1}{\lambda}+\frac{a_0}{\lambda^2}\right) \cdot \sum_{i\geq
    0}\frac{d_i}{\lambda^i} \\
  = a_2d_0 + \frac{a_2d_1+a_1d_0}{\lambda} +
  \frac{a_2d_2+a_1d_1+a_0d_0}{\lambda^2} + \dots +
  \frac{a_2d_i+a_1d_{i-1}+a_0d_{i-2}}{\lambda^i} +\dots
\end{multline*}
Compute the coefficients~$d_i$ now. One has
\begin{equation*}
  \left(-\frac{c_1\lambda+c_0}{\lambda^2}\right)^i =
  (-1)^i\sum_{j=0}^i\binom{i}{j}\frac{c_1^jc_0^{i-j}}{\lambda^{2i-j}},
\end{equation*}
from where it follows that
\begin{equation*}
  d_0=1,\qquad d_1=-c_1
\end{equation*}
and
\begin{equation}\label{eq:19}
  d_i=
  \begin{cases}\displaystyle
    \sum_{j=0}^k(-1)^{k-j}\binom{k+j}{2j}c_0^{k-j}c_1^{2j}&\text{if }
    i=2k,\\[4pt]
    \displaystyle
    \sum_{j=0}^k(-1)^{k-j+1}\binom{k+j+1}{2j+1}c_0^{k-j}c_1^{2j+1}&\text{if } i=2k+1
  \end{cases}
\end{equation}
for~$i>1$, 
Or, in shorter notation
\begin{equation}
  \label{eq:17}
  d_i=\sum_{j=0}^{[i/2]}(-1)^{[i/2]-j+p(i)}
  \binom{[i/2]+j+p(i)}{2j+p(i)}c_0^{[i/2]-j}c_1^{2j+p(i)}, 
\end{equation}
where~$p(i)=i\bmod 2$ is the parity of~$i$ and~$[k/2]$ is the integer part.

Gathering together the results of the above computations, one obtains that in
the case of coverings~\eqref{eq:18} we have~$X_{-1}=Y_{-1}=0$, while other
coefficients are
\begin{align*}
  &X_0=a_2,&&Y_0=b_2;\\
  &X_1=a_1-a_2c_1,&&Y_1=b_1-b_2c_1;\\
  &X_2=a_0-a_1c_1+a_2(c_1^2-c_0),&&Y_2=b_0-b_1c_1+b_2(c_1^2-c_0);\\
  &\dots&&\dots\\
  &X_i=a_0d_{i-2}+a_1d_{i-1}+a_2d_i,&&Y_i=b_0d_{i-2}+b_1d_{i-1}+b_2d_i;\\
  &\dots&&\dots,
\end{align*}
where the functions~$d_i$ are given by~\eqref{eq:19}.

Let us now show how these general constructions look like in the particular
cases of the equations under consideration.

\subsection{Equation~\eqref{eq:1}}
\label{sec:equation-eqrefeq:1-3}

Note first that the covering~\eqref{eq:7} is not of the
form~\eqref{eq:18}. Nevertheless, it can be transformed to the needed form by
the gauge transformation~$w\mapsto we^{-x}$. Then~\eqref{eq:7} acquires the
form
\begin{equation*}
  w_x=\frac{(u_x+u)e^xw^2-u_ye^{2x}w}{w^2-(u_x+u)e^xw-u_ye^{2x}},\qquad
  w_y=-\frac{u_ye^xw^2}{w^2-(u_x+u)e^xw-u_ye^{2x}}.
\end{equation*}
We have~$\abs{w}=-1$.

Thus,
\begin{align*}
  &a_0=0,&&a_1=-u_ye^{2x}&&a_2=(u_x+u)e^x,\\
  &b_0=0,&&b_1=0,&&b_2=-u_ye^x,\\
  &c_0=-u_ye^{2x},&&c_1=-(u_x+u)e^x.
\end{align*}
Let us compute the coefficients~$d_i$. By~\eqref{eq:19}, we have
\begin{multline*}
  d_{2k}=\sum_{j=0}^k(-1)^{k-j}\binom{k+j}{2j}\left(-u_ye^{2x}\right)^{k-j}
  \left(-(u_x+u)e^{x}\right)^{2j}\\ =
  e^{2kx}\sum_{j=0}^k\binom{k+j}{2j}u_y^{k-j}(u_x+u)^{2j} 
\end{multline*}
and
\begin{multline*}
  d_{2k+1}=\sum_{j=0}^k(-1)^{k-j+1}\binom{k+j+1}{2j+1}\left(-u_ye^{2x}\right)^{k-j}
  \left(-(u_x+u)e^x\right)^{2j+1}\\ = e^{(2k+1)x}\sum_{j=0}^k\binom{k+j+1}{2j+1}
  u_y^{k-j}(u_x+u)^{2j+1},
\end{multline*}
or
\begin{equation}\label{eq:20}
  d_i=e^{ix}\sum_{j=0}^{[i/2]}\binom{[i/2]+j+p(i)}{2j+p(i)}
  u_y^{[i/2]-j}(u_x+u)^{2j+p(i)}.
\end{equation}
Hence,
\begin{align*}
  &X_0=(u_x+u)e^x,&&Y_0=-u_ye^x;\\
  &X_1=\left((u_x+u)^2-u_y\right)e^{2x},&&Y_1=(u_x+u)u_ye^{2x}
\end{align*}
and
\begin{align*}
  X_i&=e^{(i+1)x}\left((u_x+u)^{i+1} +
    \sum_{j=1}^{[(i+1)/2]}\left(\binom{i-j}{i-2j} - \binom{i-j}{i-2j+1}\right)
      u_y^j(u_x+u)^{i-2j+1}\right),\\
  Y_i&=-e^{(i+1)x}\sum_{j=0}^{[i/2]}\binom{[i/2]+j+p(i)}{2j+p(i)}
  u_y^{[i/2]-j+1}(u_x+u)^{2j+p(i)}
\end{align*}
for~$i>1$ (we assume~$\binom{\alpha}{\beta}=0$ for~$\beta<0$). Obviously,
\begin{equation*}
  \abs{X_i}=-i-1,\qquad\abs{Y_i}=-i-2.
\end{equation*}
The functions~$X_i$, $Y_i$ define, by Equations~\eqref{eq:21}, the infinite
number of nonlocal variables~$\psi_i$ for Equation~\eqref{eq:1} with
\begin{equation*}
  \abs{\psi_i}=-i-1.
\end{equation*}
The corresponding conservation laws have the same weights and the first three
of them coincide (up to equivalence) with the local conservation
laws~$\omega_{-2}$, $\omega_{-3}$, $\omega_{-4}$ described in
Section~\ref{sec:equation-eqrefeq:1-1}. The first essentially nonlocal one is
associated to~$\psi_3$.

\subsection{Equation~\eqref{eq:2}}
\label{sec:equation-eqrefeq:2-3}

Due to Equations~\eqref{eq:22}, one has
\begin{align*}
  &a_0=0,&&a_1=0&&a_2=-1,\\
  &b_0=0,&&b_1=u_y,&&b_2=0,\\
  &c_0=u_y,&&c_1=u_x-x.
\end{align*}
Hence,
\begin{align*}
  X_0&=-1,&&Y_0=0;\\
  X_1&=u_x-x,&&Y_1=u_y;\\
  X_2&=-(u_x-x)^2+u_y,&&Y_2=-u_y(u_x-x)
\end{align*}
and
\begin{align*}
  X_i&=-d_i=\sum_{j=0}^{[i/2]}(-1)^{[i/2]-j+p(i)+1}
  \binom{[i/2]+j+p(i)}{2j+p(i)}u_y^{[i/2]-j}(u_x-x)^{2j+p(i)},\\
  Y_i&=u_yd_{i-1}=\sum_{j=0}^{[(i-1)/2]}(-1)^{[(i-1)/2]-j+p(i-1)}\times\\
  &\times\binom{[(i-1)/2]+j+p(i-1)}{2j+p(i-1)}u_y^{[(i-1)/2]-j+1}(u_x-x)^{2j+p(i-1)}
\end{align*}
for~$i>2$.
Consequently,
\begin{align*}
  \psi_{0,x}&=-X_0=1,&&\psi_{0,y}=-Y_0=0;\\
  \psi_{1,x}&=-X_1=-u_x+x,&&\psi_{1,y}=-Y_1=-u_y
\end{align*}
and one may set
\begin{equation*}
  \psi_0=x,\qquad\psi_1=-u+\frac{x^2}{2},
\end{equation*}
while
\begin{equation*}
  \psi_{2,x}=(u_x-x)^2+u_y+u-\frac{x^2}{2},\quad\psi_{2,y}=(u_x-x)u_y
\end{equation*}
and for~$i>2$
\begin{align*}
  \psi_{i,x}&=-(i-1)\psi_{i-1} + (i-2)X_1\psi_{i-2} + \dots + X_{i-3}\psi_2 +
  \left(\frac{x^2}{2}-u\right)X_{i-2} - X_i,\\
  \psi_{i,y}&=(i-2)Y_1\psi_{i-2} + \dots + Y_{i-3}\psi_2 +
  \left(\frac{x^2}{2}-u\right)Y_{i-2} - Y_i,
\end{align*}
where~$X_k$, $Y_k$ are given by the above formulas.

One has
\begin{equation*}
  \abs{X_i}=i,\quad\abs{Y_i}=i+1,\quad\abs{\psi_i}=i+1.
\end{equation*}
The conservation law corresponding to~$\psi_i$ is of the weight~$i+1$ and the
first two ones, up to equivalence coincide with those described in
Section~\ref{sec:equation-eqrefeq:2-1}, while all the others are essentially
nonlocal.

\subsection{Equation~\eqref{eq:3}}
\label{sec:equation-eqrefeq:3-3}

By Equation~\eqref{eq:13}, we have
\begin{align*}
  &a_0=0,&&a_1=u_y&&a_2=-1,\\
  &b_0=0,&&b_1=-1,&&b_2=0,\\
  &c_0=xu_y-u_x-2y,&&c_1=-(u_y+x).
\end{align*}
Consequently,
\begin{align*}
  X_0&=-1,&&Y_0=0;\\
  X_1&=-x,&&Y_1=-1;\\
  X_2&=-u_x-x^2-2y,&&Y_2=-u_y-x
  \intertext{and}
  X_i&=u_yd_{i-1}-d_i,&&Y_i=-d_{i-1}
\end{align*}
for~$i>2$, where
\begin{equation*}
  d_i=\sum_{j=0}^{[i/2]}(-1)^{[i/2]-j}
  \binom{[i/2]+j+p(i)}{2j+p(i)}(xu_y-u_x-2y)^{[i/2]-j}(u_y+x)^{2j+p(i)}.
\end{equation*}
One has
\begin{equation*}
  \abs{X_i}=i,\qquad\abs{Y_i}=i-1.
\end{equation*}

Thus we have
\begin{align*}
  \psi_{1,x}&=x,&&\psi_{1,y}=1;\\
  \psi_{2,x}&=u_x+\frac{x^2}{2}+y,&&\psi_{2,y}=u_y+x
\end{align*}
and we may set
\begin{equation*}
  \psi_1=\frac{x^2}{2}+y,\qquad\psi_{2}=u+xy+\frac{x^3}{6}.
\end{equation*}
Then the other potentials are defined by
\begin{align*}
  \psi_{i,x}&=-(i-1)\psi_{i-1}-(i-2)\psi_{i-2}(i-3)X_2\psi_{i-3}+ \dots \\
  &\ \dots+
  3X_{i-4}\psi_3 + \left(2u+2xy+\frac{x^3}{3}\right)X_{i-3} +
  \left(\frac{x^2}{2} +y\right)X_{i-2}-X_i,\\
  \psi_{i,y}&=-(i-2)\psi_{i-2}(i-3)Y_2\psi_{i-3}+ \dots \\
  &\ \dots+
  3Y_{i-4}\psi_3 + \left(2u+2xy+\frac{x^3}{3}\right)Y_{i-3} +
  \left(\frac{x^2}{2} +y\right)Y_{i-2}-Y_i,
\end{align*}
$i>2$. We have
\begin{equation*}
  \abs{\psi_i}=i+1.
\end{equation*}

The conservation laws associated with~$\psi_3,\dots,\psi_7$ are equivalent
to~$\omega_4,\dots,\omega_8$ introduced in
Section~\ref{sec:equation-eqrefeq:3-1}. The first essentially nonlocal
conservation law corresponds to~$\psi_8$.

\subsection{Proof of nontriviality}
\label{sec:proof-nontriviality}

We shall now prove that the above constructed conservation laws are
nontrivial. To this end, introduce the notation~$\mathcal{E}_\alpha$,
$\alpha=1$, $2$, $3$, for Equations~\eqref{eq:1}, \eqref{eq:2}
and~\eqref{eq:3}, respectively, and
\begin{equation*}
  \tau_{i,\alpha}\colon\mathcal{E}_{i,\alpha}\to\mathcal{E}_{\alpha}
\end{equation*}
for the coverings defined by the nonlocal
variables~$\psi_\alpha,\dots,\psi_i$. Let
\begin{equation*}
  D_x^{i,\alpha},\qquad  D_y^{i,\alpha}
\end{equation*}
be the total derivatives on~$\mathcal{E}_{i,\alpha}$.

\begin{proposition}\label{prop-1}
  For all~$i\geq\alpha$\textup{,} the only solutions of the system
  \begin{equation}\label{eq:23}
    D_x^{i,\alpha}(f)=0,\qquad  D_y^{i,\alpha}(f)=0
  \end{equation}
  are constants.
\end{proposition}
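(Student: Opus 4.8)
The plan is to argue that a function $f$ on the nonlocal equation $\mathcal{E}_{i,\alpha}$ annihilated by both total derivatives must in fact be independent of all the nonlocal variables $\psi_\alpha,\dots,\psi_i$, and hence a locally constant function on the base equation $\mathcal{E}_\alpha$, which by connectedness is a genuine constant. The key is to exploit the weight (scaling) grading that every equation $\mathcal{E}_\alpha$ carries, and which lifts to the coverings because, as computed in the previous subsections, $\abs{\psi_j}$ is a fixed nonzero number ($-j-1$, $j+1$, $j+1$ in the three cases). First I would establish that the covering $\tau_{i,\alpha}$ is still graded: the total derivatives $D_x^{i,\alpha}$, $D_y^{i,\alpha}$ are homogeneous of weights $\abs{D_x}$, $\abs{D_y}$ respectively (these are $0$ and $-1$, or $-1$ and $0$, etc., read off from $\abs{x}$, $\abs{y}$), because the right-hand sides $\psi_{j,x}=\dots$, $\psi_{j,y}=\dots$ in Equations~\eqref{eq:21} are homogeneous of the correct weight — this is exactly the content of the formulas $\abs{X_j}$, $\abs{Y_j}$ derived in each case. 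Consequently we may assume without loss of generality that $f$ is homogeneous of some weight.

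Next I would perform a descending induction on the highest nonlocal variable actually occurring in $f$. Suppose $f$ depends nontrivially on $\psi_i$ (the top one present in $\mathcal{E}_{i,\alpha}$); since $\psi_i$ occurs in no defining relation except its own and $D_x^{i,\alpha}(\psi_i)=-X_i+(\text{terms in lower }\psi\text{'s})$, the coefficient of $\partial_{\psi_i}$ in $D_x^{i,\alpha}(f)=0$ forces $\partial_{\psi_i}f$ to satisfy again a pair of equations of the same shape but one level down, and a short argument (using that $X_i$, $Y_i$ are not both constant and not both derivatives of a function of the lower variables — this is where the explicit formulas for $X_i$, $Y_i$, involving $u_y^{[i/2]}$ etc., are used) shows $\partial_{\psi_i}f=0$, contradiction. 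Thus $f$ descends to $\mathcal{E}_{i-1,\alpha}$; iterating, $f$ descends all the way to $\mathcal{E}_\alpha$ and depends only on $x$, $y$, $u$ and finitely many jet coordinates, where $D_x(f)=D_y(f)=0$ on a (connected) PDE of two independent variables immediately gives $f=\const$.

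The main obstacle — and the step that needs genuine care rather than bookkeeping — is the inductive step that kills the dependence on the top variable $\psi_i$: one must rule out the possibility that $f$ contains $\psi_i$ only through a combination that is itself ``closed,'' i.e. that there is no function $g(x,y,[u],\psi_\alpha,\dots,\psi_{i-1})$ with $D_x g = X_i$, $D_y g = Y_i$ on $\mathcal{E}_{i-1,\alpha}$. Equivalently, this is the statement that the conservation law $\psi_i$ is \emph{not} a consequence of (a combination of) the previous ones together with local conservation laws — precisely the nontriviality we are after, so the induction is really proving the whole family nontrivial at once. I would handle this by the weight count: the space of local conservation laws of each weight was determined in Section~\ref{sec:symm-cosymm-cons} and is finite-dimensional, the weights $\abs{\psi_j}$ grow without bound, and beyond the first few (the ones explicitly identified with $\omega_{-2},\omega_{-3},\dots$ or $\omega_4,\dots,\omega_8$) there is simply no local conservation law of the matching weight available, while a nonlocal combination of $\psi_\alpha,\dots,\psi_{i-1}$ of the right weight can be shown, by examining the leading $u$-jet terms in $X_i$, $Y_i$ versus those generated by products of lower data, not to reproduce $X_i$, $Y_i$. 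Once that leading-term comparison is set up, the rest is routine.
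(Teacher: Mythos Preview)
Your proposal ultimately rests on the same key device as the paper's proof---comparing leading polynomial degrees in $u_x,u_y$ of the coefficients in the nonlocal tail---but you reach it by a longer route with a couple of soft spots.

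The paper's argument is more direct. It writes $D_x^{i,\alpha}=D_x^\alpha+\sum_{j}X_j^{i,\alpha}\,\partial/\partial\psi_j$ and simply records that each $X_j^{i,\alpha}$, $Y_j^{i,\alpha}$ is a polynomial in $u_x,u_y$ whose top-degree part grows strictly with~$j$; in particular the coefficient of $\partial/\partial\psi_i$ carries the unique highest-degree monomial (namely $\pm e^{(i+1)x}u_x^{i+1}$, $\pm u_x^{i}$, or $\pm u_y^{i-1}$ in the three cases). No other summand in $D_x^{i,\alpha}(f)=0$ can match that top degree, so $\partial f/\partial\psi_i=0$ outright; induction on~$i$ then finishes. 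There is no preliminary reduction to homogeneous~$f$, no passage through ``$\partial_{\psi_i}f$ satisfies the same system one level down'', and no reformulation as ``the $i$-th law is nontrivial over $\mathcal{E}_{i-1,\alpha}$''. Your detour is defensible in spirit but costs an extra layer: you need the weight grading to force $f$ to be polynomial in $\psi_i$ before you can even invoke the inductive hypothesis on $\partial_{\psi_i}f$, and your phrase ``one level down'' is not literally correct---$\partial_{\psi_i}f$ still lives on $\mathcal{E}_{i,\alpha}$, not $\mathcal{E}_{i-1,\alpha}$, until you have separately shown it is independent of~$\psi_i$.

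Separately, your appeal to ``there is simply no local conservation law of the matching weight available'' is not supported by the paper: Section~\ref{sec:symm-cosymm-cons} classifies only cosymmetries depending on $x,y,u,u_x,u_y$, so higher-order local conservation laws are not ruled out there. This is harmless in the end because the leading-term comparison you also invoke already does all the work, but the weight-count step as you state it is not an independent argument and should be dropped.
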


\begin{proof}
  Let us present the total derivatives in the form
  \begin{equation*}
    D_x^{i,\alpha}=D_x^{\alpha}+X^{i,\alpha},\quad
    D_y^{i,\alpha}=D_y^{\alpha}+Y^{i,\alpha}, 
  \end{equation*}
  where~$D_x^{\alpha}$, $D_y^{\alpha}$ are the total derivatives
  on~$\mathcal{E}_\alpha$ and~$X^{i,\alpha}$, $Y^{i,\alpha}$ are the `nonlocal
  tails':
  \begin{equation*}
    X^{i,\alpha}=\sum_{j=\alpha}^iX_j^{i,\alpha}\frac{\partial}{\partial\psi_j},
    \qquad 
    Y^{i,\alpha}=\sum_{j=\alpha}^iY_j^{i,\alpha}\frac{\partial}{\partial\psi_j},
  \end{equation*}
  $X_j^{i,\alpha}$, $Y_j^{i,\alpha}$ being the right-hand sides of the defining
  equations~\eqref{eq:21} for the potentials~$\psi$.

  From the constructions of
  Sections~\ref{sec:equation-eqrefeq:1-3}--\ref{sec:equation-eqrefeq:3-3} one
  readily sees that the quantities~$X_j^{i,\alpha}$ and~$Y_j^{i,\alpha}$ are
  polynomials in~$u_x$ and~$u_y$ and, moreover,
  \begin{align*}
    X^{i,1}&=\pm e^{(i+1)x}u_x^{i+1}\frac{\partial}{\partial\psi_i}+o,&&
    Y^{i,1}=\pm e^{(i+1)x}u_x^{i}u_y\frac{\partial}{\partial\psi_i}+o;\\ 
    X^{i,2}&=\pm u_x^i\frac{\partial}{\partial\psi_i}+o;&&
    Y^{i,2}=\pm u_x^{i-1}\frac{\partial}{\partial\psi_i}+o;\\
    X^{i,3}&=\pm u_y^{i-2}u_x\frac{\partial}{\partial\psi_i}+o,&&
    Y^{i,3}=\pm u_y^{i-1}\frac{\partial}{\partial\psi_i}+o,
  \end{align*}
  where~$o$ denotes terms of lower degree.

  Now, the proof goes by induction. For small~$i$'s the result follows from
  the fact that the cosymmetries corresponding to the local conservation laws
  do not vanish and these conservation laws are of different weights. Assume
  now that the statement is valid for all~$k<i$ and consider
  Equation~\eqref{eq:23}. Then from the above estimates it follows
  that~$\partial f/\partial\psi_i=0$.
\end{proof}

Evidently, nontriviality of the constructed conservation laws is a direct
consequence of the Proposition~\ref{prop-1}.

\section{On reductions of the recursion operators}
\label{sec:reduct-recurs-oper}

We show here that symmetry reductions of Equations~\eqref{eq:4}, \eqref{eq:9},
and~\eqref{eq:11} are incompatible with their recursion operators and thus
the latter are not inherited by Equations~\eqref{eq:1}, \eqref{eq:2},
and~\eqref{eq:3}, respectively.

\subsection{A general construction}
\label{sec:general-construction-1}

We treat here recursion operators for symmetries as B\"{a}cklund
transformations of the tangent coverings, cf.~\cite{Mar:Another}. More
precisely, let~$\mathcal{E}$ be a differential equation given by the system
\begin{equation*}
  \mathcal{E}=\{F=0\},\quad F=(F^1(x,y,[u]),\dots,F^s(x,y,[u])),
\end{equation*}
$F^j$ being functions on some jet space,~\cite{KLV}. Here, as above,~$[u]$
denotes the collection of~$u$ and its derivatives. The tangent
covering~$\tc=\tc_{\mathcal{E}}\colon\mathcal{T}\mathcal{E}\to\mathcal{E}$ is
the projection~$(x,y,[u],[q])\mapsto(x,y,[u])$ of the system
\begin{equation*}
  \mathcal{T}\mathcal{E}=\{F(x,y,[u])=0,\ \ell_F(x,y,[u],[q])=0\}
\end{equation*}
to~$\mathcal{E}$. The characteristic property of~$\tc$ is that its sections
that preserve the Cartan (higher contact) distribution are identified with
symmetries of~$\mathcal{E}$.

A B\"{a}cklund transformation between equations~$\mathcal{E}_1$
and~$\mathcal{E}_2$ is a diagram
\begin{equation*}
  \xymatrix{
    &\mathcal{B}\ar[dr]^{\tau_2}\ar[dl]_{\tau_1}&\\
    \mathcal{E}_1&&\mathcal{E}_2\rlap{,}
  }
\end{equation*}
where~$\tau_1$ and~$\tau_2$ are coverings. It relates solutions
of~$\mathcal{E}_1$ and~$\mathcal{E}_2$ to each other. A recursion operator
between symmetries of~$\mathcal{E}_1$ and~$\mathcal{E}_2$ is a B\"{a}cklund
transformation of the form
\begin{equation*}
  \xymatrix{
    &\mathcal{T}\mathcal{E}_1\ar[r]^{\tc_{\mathcal{E}_1}}&\mathcal{E}_1\\
    \mathcal{R}\ar[ur]^{\tau_1}\ar[dr]_{\tau_2}&&\\
    &\mathcal{T}\mathcal{E}_2\ar[r]^{\tc_{\mathcal{E}_2}}&\mathcal{E}_2\rlap{.}
  }
\end{equation*}
In particular, if~$\mathcal{E}_1=\mathcal{E}_2={\mathcal{E}}$ it relates
symmetries of~$\mathcal{E}$ to each other. Then~$\mathcal{R}$ may be
considered as an equation
\begin{equation*}
  \mathcal{R}\subset
  \mathcal{T}\mathcal{E}\otimes_{\mathcal{E}}\mathcal{T}\mathcal{E} 
\end{equation*}
in the Whitney product of~$\tc_{\mathcal{E}}$ with itself.

Any symmetry~$\phi=\phi(x,y,[u])$ of~$\mathcal{E}$ admits a natural
lift~$\Phi=(\phi,\phi')$ to~$\mathcal{T}\mathcal{E}$. To this end, it suffices
to set
\begin{equation*}
  \phi'=\frac{\partial\phi}{\partial u}q+ \dots +\frac{\partial\phi}{\partial
    u_\sigma}q_\sigma+ \dots
\end{equation*}

Choose a symmetry~$\phi$ of~$\mathcal{E}$ and denote
by~$\rd_\phi\colon\mathcal{E}\to\mathcal{E}_\phi$ the corresponding reduction
map. Then the diagram
\begin{equation*}
  \xymatrixcolsep{3pc}
  \xymatrix{
    \mathcal{T}\mathcal{E}\ar[r]^{\tc_{\mathcal{E}}}\ar[d]_{\rd_\Phi}&
    \mathcal{E}\ar[d]^{\rd_\phi}\\ 
    (\mathcal{T}\mathcal{E})_\Phi=
    \mathcal{T}(\mathcal{E}_\phi)\ar[r]^-{\tc_{\mathcal{E}_\phi}}& 
    \mathcal{E}_\phi
  }
\end{equation*}
is commutative. An immediate consequence of this fact is

\begin{proposition}
  Let~$\mathcal{R}\subset
  \mathcal{T}\mathcal{E}\otimes_{\mathcal{E}}\mathcal{T}\mathcal{E}$ be a
  recursion operator for symmetries of equation~$\mathcal{E}$ and~$\phi$ be a
  symmetry of~$\mathcal{E}$. If~$\mathcal{R}$ is invariant with respect
  to~$\phi$ then~$\mathcal{R}_\Phi$ is a recursion operator for symmetries
  of~$\mathcal{E}_\phi$.
\end{proposition}

\subsection{Recursion operators for symmetries of 3D systems}
\label{sec:recurs-oper-symm}

We briefly recall here the results on recursion operators for symmetries of
Equation~\eqref{eq:4}, \eqref{eq:9}, and~\eqref{eq:11} obtained
in~\cite{Morozov2012,Morozov2014}

\subsubsection*{The universal hierarchy equation}
\label{sec:equation-eqrefeq:1-2}

Equation~\eqref{eq:4} admits the following recursion operator
\begin{equation}\label{eq:25}
  \begin{array}{ll}
  D_y(\tilde{\varphi}) &= u_y \, D_x(\varphi) - u_{xy}\, \varphi,   \\[5pt]
  D_z(\tilde{\varphi}) &= u_z \,D_x(\varphi)-D_y(\varphi) - u_{xz}\,\varphi
\end{array}
\end{equation}
that acts on its symmetries.

\subsubsection*{The 3DrdDym equation}
\label{sec:equation-eqrefeq:2-2}
The B\"{a}cklund transformation
\begin{equation}\label{eq:26}
  \begin{array}{ll}
  D_x(\tilde{\varphi}) &= u_x \, D_x(\varphi) - D_t(\varphi) - u_{xx}\,
  \varphi,   \\[5pt]
  D_y(\tilde{\varphi}) &= u_y \,D_x(\varphi) - u_{xy}\,\varphi
\end{array}
\end{equation}
is a recursion operator for symmetries of Equation~\eqref{eq:9}.

\subsubsection*{The Pavlov equation}
\label{sec:equation-eqrefeq:3-2}
The relations
\begin{equation}\label{eq:24}
  \begin{array}{ll}
  D_x(\tilde{\varphi}) &= u_x \, D_x(\varphi) + D_y(\varphi) - u_{xx}\,
  \varphi,   \\[5pt]
  D_y(\tilde{\varphi}) &= D_t(\varphi)+ u_y \,D_x(\varphi) - u_{xy}\,\varphi.
\end{array}
\end{equation}
are a recursion operator for symmetries of Equation~\eqref{eq:11}.

\subsection{The negative result}
\label{sec:negative-result}
Here we show that the general construction of
Section~\ref{sec:general-construction-1} produces no recursion operator for
the reduced equations under consideration.
\begin{proposition}
  Recursion operators~\eqref{eq:25}\textup{,} \eqref{eq:26} and~\eqref{eq:24}
  are not invariant with respect to the natural lifts of the
  symmetries~\eqref{uhe_symmetry}\textup{,}~\eqref{3DrdDym_symmetry}\textup{,}
  and~\eqref{Pavlov_symmetry}\textup{,} respectively.
\end{proposition}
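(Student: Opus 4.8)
The plan is to verify directly, for each of the three cases, that the recursion operator fails to descend to the quotient equation because the defining relations of the operator are not mapped to themselves by the lifted symmetry. Recall from Section~\ref{sec:general-construction-1} that a recursion operator $\mathcal{R}$ descends under the reduction $\rd_\phi$ precisely when $\mathcal{R}$ is invariant with respect to $\phi$, i.e.\ when the evolutionary vector field $\Ev_\Phi$ on $\mathcal{T}\mathcal{E}\otimes_{\mathcal{E}}\mathcal{T}\mathcal{E}$ (where $\Phi=(\phi,\phi')$ is the natural lift of $\phi$, together with a second copy $(\tilde\phi,\tilde\phi')$ acting on the $\tilde\varphi$-fiber) is tangent to the submanifold $\mathcal{R}$. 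So the computation to be done is: write $\mathcal{R}$ as the system \eqref{eq:25} (resp.\ \eqref{eq:26}, \eqref{eq:24}), apply $\Ev_\Phi$ to each of its two defining equations, and check whether the result vanishes modulo $\mathcal{R}$ and the equation $\mathcal{E}$. If it does not, the operator is not $\phi$-invariant.

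Concretely, for the universal hierarchy equation \eqref{eq:4} with symmetry $\phi=u_z+u_x+yu_y+u$ from \eqref{uhe_symmetry}, I would first compute the natural lift $\phi'=q_z+q_x+yq_y+q$ acting on the fiber coordinate $q$ of $\mathcal{T}\mathcal{E}$, and the analogous lift $\tilde\phi'$ acting on the coordinate $\tilde q$ (the fiber variable for $\tilde\varphi$); the B\"acklund variable $\tilde\varphi$ itself must transform by a lift that is compatible with \eqref{eq:25}, and the whole point is that no such compatible lift exists. The cleanest way to see this is to treat $\varphi$ and $\tilde\varphi$ as the two tangent-fiber unknowns, differentiate the two relations in \eqref{eq:25} along the prolonged flow of $\Ev_\phi$, substitute $\ell_F(\varphi)=0$ and the relations \eqref{eq:25} themselves, and collect the terms that survive. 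One expects a nonzero obstruction, typically proportional to a single independent term such as $u_y D_x(\varphi)$ or $\varphi$ with a nonconstant coefficient built from the explicit $y$ and the linear term $u$ in $\phi$ — precisely the inhomogeneous pieces that the scaling-type symmetry introduces and that the homogeneous operator cannot absorb. I would then repeat the same calculation verbatim for \eqref{eq:9}/\eqref{3DrdDym_symmetry}/\eqref{eq:26} and for \eqref{eq:11}/\eqref{Pavlov_symmetry}/\eqref{eq:24}, each time exhibiting the explicit nonvanishing residual.

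The main obstacle is bookkeeping rather than conceptual: one must be careful that "invariance of $\mathcal{R}$ under $\phi$" is tested with the \emph{correct} lift of $\phi$ to the $\tilde\varphi$-copy of the tangent covering — there is a genuine ambiguity in how $\tilde\varphi$ may be assigned a weight/shift, and a priori one might worry that some clever choice of lift of $\tilde\varphi$ rescues invariance. To close this gap I would argue that the lift on the $\varphi$-copy is forced (it is the canonical lift $\Phi=(\phi,\phi')$), so that the image of the first relation of \eqref{eq:25} under $\Ev_\Phi$ already determines how $D_y(\tilde\varphi)$ must transform; one then checks that this forced transformation of $\tilde\varphi$ is inconsistent with the second relation of \eqref{eq:25} (or vice versa). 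Equivalently, and perhaps more transparently, one notes that a recursion operator invariant under $\phi$ would have to respect the weight grading induced by the scaling part of $\phi$, and then shows that \eqref{eq:25} mixes the homogeneous piece with the inhomogeneous shift coming from the non-scaling summands of $\phi$ (the explicit $y u_y$ and $u$ for \eqref{uhe_symmetry}, the $-x u_x$ and $2u$ for \eqref{3DrdDym_symmetry}, the $-2xu_x-yu_y+3u$ for \eqref{Pavlov_symmetry}), producing a term with no counterpart on the other side of the equation.

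Finally, I would present the verification compactly: for each equation, state the lift $\Phi$ explicitly, give the one-line residual obtained by applying $\Ev_\Phi$ to \eqref{eq:25} (resp.\ \eqref{eq:26}, \eqref{eq:24}) and reducing modulo the operator and the equation, and observe that this residual is not identically zero — indeed it is not even proportional to the defining relations, since it carries a factor that depends explicitly on $x$ or $y$ or on $u$ itself and hence cannot be written as $\lambda\cdot(\text{relation})$ for any differential-function multiplier $\lambda$. This establishes that none of the three recursion operators is $\phi$-invariant, which by the Proposition of Section~\ref{sec:general-construction-1} is exactly the assertion. Since the computations are short and of the same flavor in all three cases, I would carry out one in full and indicate that the other two are entirely analogous.
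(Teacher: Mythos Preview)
Your proposal is correct and takes essentially the same approach as the paper: the paper's proof consists of the single line ``By direct check,'' and what you have outlined is precisely how that check is carried out. Your worry about an ambiguity in lifting $\phi$ to the $\tilde\varphi$-copy is slightly overthought---the natural lift of Section~\ref{sec:general-construction-1} is canonical on each factor of $\mathcal{T}\mathcal{E}\otimes_{\mathcal{E}}\mathcal{T}\mathcal{E}$, so there is nothing to choose---but this does not affect the validity of your argument.
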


\begin{proof}
  By direct check.
\end{proof}

\begin{remark}
  The same fact holds for the reduction of the Pavlov equation that leads to
  the Gibbons-Tsarev equation.
\end{remark}

\section{Discussion}
\label{sec:conclusions}

Let us first establish the following fact:
\begin{proposition}
  Equations~\eqref{eq:1}\textup{,} \eqref{eq:2}\textup{,} and~\eqref{eq:3} are
  pair-wise inequivalent.
\end{proposition}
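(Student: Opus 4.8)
The plan is to separate the three equations by invariants of the relevant equivalence relation — transformations of contact type, in particular point transformations — all of which have already been extracted in Section~\ref{sec:symm-cosymm-cons}. A contact transformation prolongs to a diffeomorphism of the infinite jet space that preserves the order filtration and intertwines each linearization with its formal adjoint; hence it sends classical symmetries to classical symmetries and classical cosymmetries to classical cosymmetries, and it is an isomorphism for the Jacobi bracket $\{\phi,\psi\}=\Ev_\phi(\psi)-\Ev_\psi(\phi)$. Consequently $\dim\cosym(\mathcal{E}_\alpha)$ and the isomorphism class of the Lie algebra $L_\alpha:=\sym(\mathcal{E}_\alpha)$ are invariants of $\mathcal{E}_\alpha$, and it suffices to show that these differ from equation to equation.

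The main step is to compute the Lie algebras $L_1$, $L_2$, $L_3$ from the generators listed in Sections~\ref{sec:equation-eqrefeq:1-1}--\ref{sec:equation-eqrefeq:3-1}. Each $L_\alpha$ is four-dimensional and solvable, and the structural data I would record are
\[
\bigl(\dim L_\alpha,\ \dim[L_\alpha,L_\alpha],\ \dim Z(L_\alpha)\bigr)=(4,2,0),\ (4,2,1),\ (4,3,0)\quad\text{for }\alpha=1,2,3\text{ respectively.}
\]
Since these triples are pairwise distinct, $L_1$, $L_2$, $L_3$ are pairwise non-isomorphic, and therefore $\mathcal{E}_1$, $\mathcal{E}_2$, $\mathcal{E}_3$ are pairwise inequivalent. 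As an independent confirmation for $\mathcal{E}_2$ one may instead invoke the cosymmetry counts, $\dim\cosym(\mathcal{E}_2)=4$ against $\dim\cosym(\mathcal{E}_1)=\dim\cosym(\mathcal{E}_3)=6$.

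The bracket computations themselves are short and are made almost automatic by the scaling gradings of Section~\ref{sec:symm-cosymm-cons}: each $L_\alpha$ is graded, with one-dimensional homogeneous components, by the weights carried by its generators. The facts worth singling out are these. In $L_2$ the generator $\phi'_0=u_y$ has vanishing bracket with every generator, so it spans the centre, whereas $L_1$ and $L_3$ have trivial centre. In $L_3$ the four generators sit in weights $0,-1,-2,-3$, so bracketing with the scaling symmetry $\phi_0$ returns a nonzero multiple of each of $\phi_{-1},\phi_{-2},\phi_{-3}$, which forces $[L_3,L_3]=\langle\phi_{-1},\phi_{-2},\phi_{-3}\rangle$ to be three-dimensional. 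In $L_1$, by contrast, the weight pattern $(-1,0,0,1)$ has two generators in weight $0$, a direct check gives $\{\phi_0,\phi'_0\}=0$, and every bracket not involving $\phi_0$ vanishes except $\{\phi'_0,\phi_1\}=\pm\phi_1$, so $[L_1,L_1]=\langle\phi_{-1},\phi_1\rangle$ is only two-dimensional.

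There is no genuine conceptual obstacle here: every step reduces to a finite, purely algebraic bracket computation with the explicit symmetry generators. The only matters requiring attention are bookkeeping ones — one must rely on the completeness of the symmetry and cosymmetry spaces displayed in Section~\ref{sec:symm-cosymm-cons}, and one must be precise about which class of transformations ``equivalence'' refers to, since the invariance of $\dim\cosym$, $\dim[L,L]$ and $\dim Z(L)$ must be checked against that class; for point and contact transformations this is standard.
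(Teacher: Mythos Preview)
Your proof is correct and follows essentially the same route as the paper: both use the dimensions of $\cosym(\mathcal{E}_\alpha)$ and the Lie-algebra structure of $\sym(\mathcal{E}_\alpha)$ as contact invariants. The only difference is organisational: the paper first invokes $\dim\cosym$ to isolate $\mathcal{E}_2$ and then compares only $L_1$ and $L_3$ via $\dim[L,L]$, whereas you compute the triple $(\dim L,\dim[L,L],\dim Z(L))$ for all three algebras at once and cite $\dim\cosym$ merely as a cross-check.
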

\begin{proof}
  Let us first compare dimensions (see~Table~\ref{tab:1}).
  \begin{table}[hbm]
    \centering
    \begin{tabular}{|l|c|c|}\hline
      &$\dim\sym(\mathcal{E})$&$\dim\cosym(\mathcal{E})$\\\hline
      Equation~\eqref{eq:1}&$4$&$6$\\
      Equation~\eqref{eq:2}&$4$&$4$\\
      Equation~\eqref{eq:3}&$4$&$6$\\
      \hline
    \end{tabular}\vspace{3pt}
    \caption{Dimensions of symmetry and cosymmetry spaces}
    \label{tab:1}
  \end{table}
  Consequently, only Equations~\eqref{eq:1} and~\eqref{eq:3} may be
  equivalent. Now, the Lie algebra structure of~$\sym(\mathcal{E})$
  \begin{table}[hbm]
    \centering
    \begin{tabular}{|c|ccc|}\hline
      Eq.~\eqref{eq:1}&$\phi_0$&$\phi'_0$&$\phi_1$\\\hline
      $\phi_{-1}$&$\phi_{-1}$&$0$&$0$\\
      $\phi_0$&$*$&$0$&$\phi_1$\\
      $\phi'_0$&$*$&$*$&$-\phi_1$\\
      \hline
    \end{tabular}\qquad\qquad
    \begin{tabular}{|c|ccc|}\hline
      Eq.~\eqref{eq:3}&$\phi_{-2}$&$\phi_{-1}$&$\phi_0$\\\hline
      $\phi_{-3}$&$0$&$0$&$-\phi_{-3}$\\
      $\phi_{-2}$&$*$&$-\phi_{-3}$&$\frac{2}{3}\phi_{-2}$\\
      $\phi_{-1}$&$*$&$*$&$-\frac{1}{3}\phi_{-1}$\\
      \hline
    \end{tabular}\vspace{3pt}
    \caption{Commutators in $\sym\mathcal{E}_{\eqref{eq:1}}$ and
      $\sym\mathcal{E}_{\eqref{eq:3}}$} 
    \label{tab:2}
  \end{table}
  for Equations~\eqref{eq:1} and~~\eqref{eq:3} is presented in
  Table~\ref{tab:2}. One can see that dimension of the commutant in the first
  case is~$2$, while in the second case it equals~$3$. Thus, the algebras are
  not isomorphic.
\end{proof}

\begin{remark}
  The equations under consideration are not equivalent to the Gibbons-Tsarev
  equation, because the symmetry algebra of the latter is five-dimensional.
\end{remark}

Nevertheless, as we saw, all these equations have several common features. In
particular, we would like
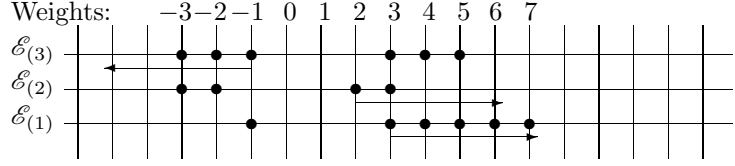
\begin{figure}
  \centering
  \begin{picture}(260,50)\linethickness{.01pt}
    \put(0,52){Weights:}
    \put(0,13){$\mathcal{E}_{(1)}$}
    \put(0,26){$\mathcal{E}_{(2)}$}
    \put(0,39){$\mathcal{E}_{(3)}$}
    \multiput(25,0)(13,0){19}{\line(0,1){50}}
    \put(20,13){\line(1,0){250}}
    \put(20,26){\line(1,0){250}}
    \put(20,39){\line(1,0){250}}
    \put(55,52){$-3$}
    \put(68,52){$-2$}
    \put(82,52){$-1$}
    \put(102,52){$0$}
    \put(115,52){$1$}
    \put(128,52){$2$}
    \put(141,52){$3$}
    \put(154,52){$4$}
    \put(167,52){$5$}
    \put(179,52){$6$}
    \put(192,52){$7$}
    \put(64,39){\circle*{4}}
    \put(77,39){\circle*{4}}
    \put(90,39){\circle*{4}}
    \put(90,34){\vector(-1,0){55}}
    \put(142,39){\circle*{4}}
    \put(155,39){\circle*{4}}
    \put(168,39){\circle*{4}}
    \put(64,26){\circle*{4}}
    \put(77,26){\circle*{4}}
    \put(129,26){\circle*{4}}
    \put(129,21){\vector(1,0){55}}
    \put(142,26){\circle*{4}}
    \put(90,13){\circle*{4}}
    \put(142,13){\circle*{4}}
    \put(142,8){\vector(1,0){55}}
    \put(155,13){\circle*{4}}
    \put(168,13){\circle*{4}}
    \put(181,13){\circle*{4}}
    \put(194,13){\circle*{4}}
  \end{picture}
  \caption{Distribution of cosymmetries}
  \label{fig:1}
\end{figure}
to indicate how local cosymmetries of our equations are distributed with
respect to weights (see Figure~\ref{fig:1}). In all three cases, they fit into
two disjoint groups with certain gaps between them: the first one consist of
cosymmetries whose corresponding conservation laws are members of infinite
series (these are underlined by arrows, and the arrow itself indicates the
direction to which the sequence of conservation laws goes). The second group
includes `standing-alone' cosymmetries.

\begin{remark}
  A similar picture is observed in the case of the Gibbons-Tsarev equation. It
  also possesses a `standing-alone' cosymmetry of order three.
\end{remark}

A natural question arises: does there exist a construction, similar to the one
of Section~\ref{sec:hier-nonl-cons}, that allows to embed the conservation laws
corresponding to the `standing-alone' cosymmetries into other infinite
hierarchies?

Another question relates to the algebras of nonlocal symmetries in the
infinite-dimensional coverings constructed above. It seems that such an
algebra for Equation~\eqref{eq:3} should be similar (or isomorphic to that of
the Gibbons-Tsarev equation), while the algebras for Equations~\eqref{eq:1}
and~\eqref{eq:2} are different: all these Lie algebras are graded, but in the
first two cases all homogeneous components are one-dimensional and for other
equations this is not the case.

Finally, it is interesting to study the structure of symmetries and
cosymmetries of the reductions that admit symmetry algebras with functional
parameters (see the Introduction) and compare them with the results described
here.

All these problems are subject to future research.

\section{Appendix: Conservation laws}
\label{sec:append-cons-laws}

We present here the conservation laws that correspond to the cosymmetries
described above. Everywhere below~$\abs{\omega_i}=i$. We also use the
notation~$\psi_\omega\in\cosym(\mathcal{E})$ for the generating function of a
conservation law~$\omega$.

\subsubsection*{Equation~\eqref{eq:1}}
\label{sec:conservation-laws}
The space of corresponding conservation laws is $6$-dimensional and spans the
following elements~$\omega_i=P_i\,dx+Q_i\,dy$:
\begin{align*}
  P_{-4} & = e^{4 x} (u_x^2 u_y+8 u^3 u_x+13 u^2 u_x^2+2 u u_x^3+8 u^2
  u_y+u_y^2-3 u u_x^2 u_{xx}+2 u u_x u_y\\
  &-2 u u_x u_{xy}-2 u u_{xx} u_y),\\
  Q_{-4} &= u e^{4 x} (-2 u_x u_y u_{xx}+3 u_x^2 u_y-u_x^2 u_{xy}+8 u u_x u_y+2 u
  u_x u_{xy}+4 u_y^2\\
  &-2 u_y u_{xy});\\[3pt]
  P_{-3} &= e^{3 x} (-u u_{xy}+u_x u_y+3 u^2 u_x+u u_x^2-2 u u_x u_{xx}),\\
  Q_{-3} &= u e^{3 x} (-u_y u_{xx}-u_x u_{xy}+u u_{xy}+2 u_x u_y);\\[3pt]
  P_{-2} &= -e^{2 x} (-u_y+u u_x+u u_{xx}),\\
  Q_{-2} &= -u e^{2 x} u_{xy};\\[3pt]
  P_2&=-\frac{1}{u_y},\\
  Q_2&=\frac{1}{u_y}(u_x+u);\\[3pt]
  P_3 &= \frac{1}{u_y^3}(u_y^2 y+2 u u_{xy}-u u_y-2 u_x u_y),\\
  Q_3 &= -\frac{1}{u_y^3}(u u_y^2 y+u_x u_y^2 y+2 u^2 u_{xy}-u^2 u_y+2 u u_x
  u_{xy}-4 u u_x u_y-2 u u_{xx} u_y\\
  &-u_x^2 u_y);\\[3pt]
  P_4 &= \frac{1}{u_y^4}(-u_y^3 y^2-4 u u_{xy} u_y y+2 u u_y^2 y+4 u_x u_y^2
  y-u^2 u_y+6 u u_x u_{xy}\\
  &-2 u u_x u_y-2 u u_{xx} u_y-3 u_x^2 u_y-u_y^2),\\
  Q_4 &=\frac{1}{u_y^4}(u u_y^3 y^2+u_x u_y^3 y^2+4 u^2 u_{xy} u_y y-2 u^2 u_y^2
  y+4 u u_x u_{xy} u_y y\\
  &-8 u u_x u_y^2 y-4 u u_{xx} u_y^2 y-2 u_x^2 u_y^2 y+u^3 u_y-6 u^2 u_x
  u_{xy}+3 u^2 u_x u_y\\
  &-6 u u_x^2 u_{xy}+9 u u_x^2 u_y+6 u u_x u_{xx} u_y +u_x^3 u_y-2 u u_{xy}
  u_y+4 u u_y^2). 
\end{align*}
Here~$\abs{\psi_{\omega}}=\abs{\omega}+1$.

\subsubsection*{Equation~\eqref{eq:2}}
\label{sec:conservation-laws-1}

The space of conservation laws is $4$-dimensional and is generated
by~$\omega_i=P_i\,dx+Q_i\,dy$ of the form
\begin{align*}
  P_{-2} &= \frac{1}{2} (2 u u_{xy}-2 u_x u_y-u_y x) \frac{e^{-2 y}}{u_y^3},\\
  Q_{-2} &= \frac{1}{2} (2 u u_x u_{xy}-2 u u_{xx} u_y+2 u u_{xy} x-u_x^2
  u_y-2 u_x u_y
  x-u_y x^2-2 u u_y) \frac{e^{-2 y}}{u_y^3};\\[3pt]
  P_{-1} &= -\frac{e^{-y}}{u_y},\\
  Q_{-1} &= -(u_x+x)\frac{ e^{-y}}{u_y};\\[3pt]
  P_3 &= u u_{xx}+3 u+u_y,\\
  Q_3 &= u u_{xy}+u_y x;\\[3pt]
  P_4 &=- \frac{1}{2} u u_{xy}+2 u_y x+ \frac{1}{2} u_x u_y +  \frac{5}{2} u x
  u_{xx}+u u_x u_{xx}+8 u x+ \frac{1}{2} u u_x,\\ 
  Q_4 &= 2 u_y x^2+ \frac{1}{2} u_x u_y x+2 u u_{xy} x+ \frac{1}{2} u u_x u_{xy}+
  \frac{1}{2} u u_{xx}  u_y+u u_y. 
\end{align*}
Again,~$\abs{\psi_\omega}=\abs{\omega}-1$.

\subsubsection*{Equation~\eqref{eq:3}}
\label{sec:conservation-laws-2}

The space of conservation laws is $6$-dimensional; elements
$\omega_i=P_i\,dx+Q_i\,dy$ of a basis are
\begin{align*}
  P_8&=u_y^3 u_x u_{yy} u + \frac{1}{5} u x u_y^3 u_{xy} + \frac{116}{5} u x^2
  u_x u_{xy} + \frac{162}{5} u x u_x u_y + \frac{229}{15} u x^3 u_y u_{xy}\\
  & + \frac{8}{5} u x^2 u_y^2 u_{xy} + \frac{3}{5} u_y^2 u_x u_{xy} u+
  \frac{379}{15} u_x u_{yy} u x^3 +
  \frac{758}{15} u_{yy} u x^3 y + 2 u_y^3 u_{yy} u y\\
  & + \frac{184}{5} u_{yy} u x y^2 + \frac{348}{5} u x^2 y u_{xy} -
  \frac{48}{5} x y u_x u_y^2+\frac{6}{5} u y u_y^2 u_{xy} + \frac{72}{5} u_y
  u_{yy} u y^2\\
  & + \frac{12}{5} u_y u_x^2 u_{yy} u + 80 u x y u_y + \frac{36}{5} u y u_x
  u_{xy} - \frac{164}{5} x^2 y u_x u_y - \frac{6}{5} y u_x u_y^3 - \frac{8}{5}
  x^2 u_x u_y^3\\
  &- \frac{1024}{15} x^4 y u_y + 43 u x^3 u_y + \frac{48}{5} u y
  u_y^2 + \frac{18}{5} u u_x u_y^2 - \frac{164}{5} x y^2 u_y^2\\
  & - \frac{1}{5} x u_x u_y^4 + \frac{52}{5} u y^2 u_{xy} + \frac{14}{5} x^2
  u_x^2 u_y-\frac{64}{5} x^2 y u_y^3 +
  \frac{2048}{5} u x^2 y + 2 y u_x^2 u_y + \frac{16}{5} u x u_y^3\\
  & + \frac{82}{5} u y u_x + \frac{32}{5} u_x u_y^2 u_{yy} u x + \frac{64}{5}
  u_y^2 u_{yy} u x y + 24 u x y u_y u_{xy}+\frac{132}{5} u_x u_{yy} u x y\\
  &+ 12 u_x u_y u_{yy} u y+ \frac{96}{5} u_x u_y u_{yy} u x^2 + \frac{192}{5}
  u_y u_{yy} u x^2 y + \frac{56}{5} u x u_x u_y u_{xy} + \frac{1}{5} u_x^2
  u_y^3 \\
  &+ \frac{3}{5} u_x^3 u_y+\frac{256}{5} u y^2 + \frac{4096}{15} u x^4 -
  \frac{241}{5} u^2 x + \frac{2}{5} u u_y^4 - \frac{24}{5} y^2 u_y^3 -
  \frac{64}{5} y^3 u_y - \frac{2}{5} y u_y^5 \\
  &+ \frac{8}{5} u u_x^2 - \frac{512}{5} x^2 y^2 u_y + \frac{64}{5} u x^2
  u_y^2+\frac{113}{5} u x^2
  u_x - \frac{512}{15} x^3 y u_y^2 - \frac{16}{5} x y u_y^4 - 4 y^2 u_x u_y\\
  &- \frac{32}{5} x^3 u_x u_y^2 + \frac{6}{5} u_x^2 u_{xy} u - \frac{256}{15}
  x^4 u_x u_y + \frac{127}{3} u x^4 u_{xy}+x u_x^2
  u_y^2,\\
  Q_8&=\frac{36}{5} u y u_x u_{yy} - \frac{72}{5} x y u_x u_y + \frac{42}{5} u
  y u_y^2 u_{yy} + \frac{92}{5} u x y u_{xy} + \frac{32}{5} u x u_y^2 u_{xy} +
  4 u x
  u_x u_{xy}\\
  & + \frac{256}{5} u x^2 y u_{yy} + \frac{12}{5} u_x u_{xy} u_y
  u+\frac{64}{3} u x^3 u_y u_{yy} +
  \frac{72}{5} u x^2 u_y^2 u_{yy} + \frac{36}{5} u y u_y u_{xy} \\
  &+ \frac{28}{5} u x u_y^3 u_{yy} + \frac{96}{5} u x^2 u_x u_{yy} +
  \frac{96}{5} u x^2 u_y u_{xy}+3 u_y^2 u_x u_{yy} u + \frac{52}{5} u y^2
  u_{yy}\\
  &+ \frac{6}{5} u_x^2 u_{yy} u + u_y^4 u_{yy} u + \frac{256}{15} u x^4 u_{yy}
  + \frac{32}{5} x y^2 u_y + \frac{94}{5} u y u_y + \frac{379}{15} u
  x^3 u_{xy}\\
  &- \frac{256}{5} x^2 y u_x+\frac{82}{5} u x u_x + 16 u x u_y^2 -
  \frac{17}{5} x u_x^2 u_y + u_y^3
  u_{xy} u + \frac{32}{5} u u_x u_y \\
  &- \frac{133}{15} x^3 u_x u_y + \frac{256}{5} x^3 y u_y + \frac{512}{5} u x
  y+\frac{176}{5} u x y u_y u_{yy} + \frac{64}{5} u x u_x u_y u_{yy} +
  \frac{12}{5} u u_y^3 \\
  & + \frac{2048}{15} u x^3+ \frac{512}{15} x^5 u_y-2 y u_x^2 - \frac{32}{5}
  y^2 u_x - \frac{41}{5} x^2 u_x^2 - \frac{512}{15} x^4
  u_x-\frac{1}{5} u_x^3;\\[3pt]
  P_7&=\frac{13}{4} u y u_x u_{yy} - \frac{25}{4} x y u_x u_y + 2 u y u_y^2
  u_{yy} + \frac{65}{4} u x y u_{xy} + \frac{1}{4} u x u_y^2 u_{xy} +
  \frac{23}{4} u x u_x u_{xy} \\
  &+ \frac{65}{4} u x^2 y u_{yy} + \frac{3}{2} u_x u_{xy} u_y u+\frac{13}{4} u
  y u_y u_{xy} + \frac{65}{8} u x^2 u_x u_{yy} + \frac{47}{8} u x^2 u_y
  u_{xy}\\
  &+ u_y^2 u_x u_{yy} u + \frac{9}{2} u y^2 u_{yy} + \frac{1}{2} u_x^2 u_{yy}
  u - \frac{49}{2} x y^2 u_y+\frac{45}{4} u y u_y+\frac{391}{24} u x^3 u_{xy}
  + 2 u x u_x \\
  &+ \frac{7}{2} u x u_y^2 + \frac{5}{4} x u_x^2 u_y +\frac{9}{2} u u_x u_y -
  \frac{49}{8} x^3 u_x u_y - \frac{343}{12} x^3 y u_y + \frac{343}{4} u x y -
  \frac{1}{4} x u_x u_y^3\\
  &-y u_x u_y^2 + \frac{21}{2} u x y u_y u_{yy} + \frac{21}{4} u x u_x u_y
  u_{yy}
  + \frac{1}{2} u u_y^3 + \frac{2401}{24} u x^3 \\
  &- \frac{9}{2} y^2 u_y^2 + \frac{1}{4} u_x^2 u_y^2 - \frac{1}{2} y u_y^4 -
  \frac{53}{8} u^2 -\frac{7}{2} x y u_y^3-\frac{49}{4} x^2 y u_y^2 -
  \frac{7}{4} x^2 u_x u_y^2 + \frac{131}{8} u x^2
  u_y,\\
  Q_7&=\frac{21}{4} u x u_x u_{yy} + \frac{21}{4} u x u_y u_{xy} +
  \frac{11}{2} u y u_y u_{yy} + \frac{35}{4} u x^2 u_y u_{yy}+\frac{9}{2} u x
  u_y^2 u_{yy} \\
  &+ 2 u_y u_x u_{yy} u + 14 u x y u_{yy} - \frac{49}{4} x y u_x+\frac{343}{8}
  u x^2 + \frac{49}{4} u y - 2 x u_x^2 - \frac{1}{2} u_x^2 u_y -
  \frac{343}{24} x^3 u_x \\
  &+ 2 u u_x + \frac{343}{24} x^4 u_y + \frac{5}{2} u u_y^2 + \frac{1}{2} u_x
  u_{xy} u + \frac{49}{4} x^2 y u_y+ \frac{49}{6} u x^3 u_{yy} + \frac{65}{8}
  u x^2 u_{xy} \\
  &- \frac{9}{4} y u_x u_y + \frac{9}{4} u y u_{xy} - \frac{33}{8} x^2 u_x u_y
  + u_y^3 u_{yy} u + u_y^2
  u_{xy} u;\\[3pt]
  P_6&=12 u y+\frac{2}{3} u u_y^2 + 36 u x^2 + \frac{1}{3} u_x^2 u_y -
  \frac{2}{3} y u_y^3 + \frac{7}{3} u x u_x u_{yy} + \frac{14}{3} u x y u_{yy}
  +   u_y u_x u_{yy} u \\
  &+ 2 u y u_y u_{yy} + \frac{17}{3} u x u_y+2 u x u_y u_{xy} - \frac{2}{3} y
  u_x u_y+\frac{8}{3} u y u_{xy}+u_x u_{xy}
  u + \frac{19}{3} u x^2 u_{xy}\\
  &-12 x^2 y u_y-2 x^2 u_x u_y-4 x y u_y^2 -\frac{1}{3}
  x u_x u_y^2 - 4 y^2 u_y-\frac{1}{3} u u_x,\\
  Q_6&=12 x u-6 x^2 u_x-2 y u_x+6 x^3 u_y-\frac{1}{3} u_x^2 + \frac{10}{3} u x
  u_y u_{yy} - \frac{5}{3} x u_x u_y + 4 u x^2 u_{yy} \\
  &+ u_y^2 u_{yy} u +
  \frac{7}{3} u x u_{xy} + \frac{8}{3} u y u_{yy}+u_x u_{yy} u+u_{xy} u_y u+2
  x u_y y;\\[3pt] 
  P_5&=-5 x u_y y - \frac{5}{2} x u_x u_y - u_y^2 y - \frac{1}{2} u_y^2 u_x +
  \frac{25}{2} x u + \frac{1}{2} u_x u_{yy} u + u y u_{yy} - \frac{1}{2}
  u_{xy} u_y u \\
  &+ \frac{1}{2} u x u_{xy} - \frac{1}{2} u u_y,\\
  Q_5&=\frac{1}{2} u u_{xy} + \frac{5}{2} u - \frac{5}{2} x u_x - \frac{1}{2}
  u_y u_x + \frac{5}{2} x^2 u_y-2 x u_y^2-\frac{1}{2} u_y^3;\\[3pt]
  P_4&=-u_y u_x-2 u_y y+4 u,\\
  Q_4&= -u_y^2+x u_y-u_x;\\[3pt]
  P_0&=\frac{u_y}{x u_y-u_x-2 y},\\
  Q_0&=-\frac{1}{x u_y-u_x-2 y}.
\end{align*}
Here~$\abs{\psi_\omega}=\abs{\omega}-1$.

\section*{Acknowledgements}
\label{sec:acknowledgements}
The authors are grateful to E.~Ferapontov for remarks and discussion.  The
$2^{\mathrm{nd}}$ author is grateful to the Mathematical Institute of the
Silesian University in Opava for support and comfortable working condition.
Computations of symmetry algebras were fulfilled using the \textsc{Jets}
software,~\cite{B-M:Jets}.

\end{document}